\documentclass[10pt,twocolumn,twoside]{IEEEtran}

\usepackage{cite}
\usepackage{amsmath,amssymb,amsfonts}
\usepackage{bm,bbm}
\usepackage{graphicx}
\usepackage{textcomp}

\usepackage{algorithm}
\usepackage{algpseudocode}

\usepackage{booktabs}

\newtheorem{theorem}{Theorem}
\newtheorem{lemma}{Lemma}
\newtheorem{problem}{Problem}

\newtheorem{definition}{Definition}
\newtheorem{remark}{Remark}
\newtheorem{proof}{proof}

\DeclareMathOperator*{\argmin}{arg\,min}

\DeclareMathOperator*{\mE}{\mathbb{E}}


\usepackage{tcolorbox}
\newtcolorbox{mybox}[1]{title=#1}
\usepackage{xcolor}
\newcommand{\gcmt}{\Comment\textcolor{gray}}



\begin{document}

\title{Towards Efficient Dynamic Uplink Scheduling over Multiple Unknown Channels}

\author{
Shuang Wu,
Xiaoqiang Ren, \IEEEmembership{Member, IEEE},
Qing-Shan Jia, \IEEEmembership{Senior Member, IEEE}, \\
Karl Henrik Johansson, \IEEEmembership{Fellow, IEEE},
Ling Shi, \IEEEmembership{Fellow, IEEE}
\thanks{Shuang Wu is with Huawei Noah's Ark Lab, Hong Kong, China (e-mail: wushuang.noah@huawei.com).}
\thanks{Xiaoqiang Ren is with the School of Mechatronic Engineering and Automation, Shanghai University, 
 Shanghai 200444, China (e-mail:
xqren@shu.edu.cn).}
\thanks{Qing-Shan Jia is with the Center for Intelligent and Networked Systems, Department of Automation, BNRist, Tsinghua University, Beijing 100084, China (e-mail: jiaqs@tsinghua.edu.cn).}
\thanks{Karl Henrik Johansson is with the Division of Decision and Control
Systems, School of Electrical Engineering and Computer Science, KTH Royal
Institute of Technology, Stockholm, Sweden.  (e-mail: kallej@kth.se).}
\thanks{Ling Shi is with the Department of Electronic and Computer Engineering, the Hong Kong University of Science and Technology, Clear
Water Bay, Kowloon, Hong Kong, China. (e-mail: eesling@ust.hk).}}

\maketitle

\begin{abstract}
Age-of-Information (AoI) is a critical metric for network applications. Existing works mostly address optimization with homogeneous AoI requirements, which is different from practice. In this work, we optimize uplink scheduling for an access point (AP) over multiple unknown channels with heterogeneous AoI requirements defined by AoI-dependent costs. The AP serves $N$ users by using $M$ channels without the channel state information. Each channel serves only one user in each decision epoch. The optimization objective is to minimize the time-averaged AoI-dependent costs plus additional communication transmission costs over an infinite horizon. This decision-making problem can be formulated as a Markov decision process, but it is computationally intractable because the size of the state space grows exponentially with respect to the number of users. To alleviate the challenge, we reformulate the problem as a variant of the restless multi-armed bandit (RMAB) problem and leverage Whittle's index theory to design an index-based scheduling policy algorithm. We derive an analytic formula for the indices, which reduces the computational overhead and facilitates online adaptation. Our numerical examples show that our index-based scheduling policy achieves comparable performance to the optimal policy and outperforms several other heuristics.
\end{abstract}

\begin{IEEEkeywords}
scheduling, efficient algorithm, learning
\end{IEEEkeywords}

\section{Introduction}

Timely information delivery is critical for systems requiring real-time operations such as networked control~\cite{walsh2001scheduling,he2022age}, Internet-of-Things (IoT)~\cite{da2014internet,xu2022schedule}, and cyber-physical systems~\cite{jazdi2014cyber}. In these scenarios, a large number of sensors measure the environment and transmit time-sensitive information to a central monitor or controller over a number of wireless channels. Due to bandwidth constraints, not all sensors can transmit data simultaneously. Intelligent transmission scheduling is thus necessary for maintaining information freshness of all sensor nodes and thus achieving good performance for the whole system. The uplink scheduling is particularly important because the uplink is usually the bottleneck due to massive user access~\cite{chen2017user} and limited user uplink transmission power~\cite{zhang2017energy}.

A solid formulation of an uplink scheduling problem requires three components: cost function, channel allocation mechanism, and channel characteristics. A recently popular metric for measuring information freshness is the Age-of-Information (AoI)~\cite{kaul2012real}. Formally, the AoI is defined as the time elapsed since the last successful information update, which precisely captures the notion of information freshness. In recent years, an increasing number of works are dedicated to minimizing AoI with various channel allocation mechanisms and channel assumptions.
Kadota et al.~\cite{kadota2018optimizing} study AoI minimization under throughput constraints.
Li et al.~\cite{li2019general} consider AoI in a generalized model for heterogeneous information sources. Talak et al.~\cite{talak2020improving} propose two provably efficient heuristics for time-varying channels. Qian et al.~\cite{qian2020minimizing} develop asymptotically efficient policies by exploiting multi-channel resources. Saurav and Vaze~\cite{saurav2021minimizing} minimize a linear combination of AoI and transmission energy consumption.

The above works mainly focus on minimizing AoI from the network transmission perspective. For network-enabled applications, minimizing the cost from the application perspective is more critical.
In some scenarios, e.g., restart process scheduling~\cite{akbarzadeh2019restless} and remote state estimation~\cite{wu2020optimal}, the served users have heterogeneous nonlinear information holding costs with respect to the AoI. The performance of existing AoI minimization algorithms thus may degrade for the application-dependent costs. Developing novel algorithms to solve scheduling problems with nonlinear AoI holding costs with network transmission constraints is thus critical.

To fill the research gap for the nonlinear cost function, we study uplink scheduling for a multi-channel multi-user scenario in which each user has an exclusive network application. Besides cost function, we also consider practical communication constraints. The setup of our problem is as follows. (1) \textbf{Cost function}. Our cost function consists of an increasing function of the AoI and fixed transmission costs for using the channels. We aim to minimize the average value of this cost over an infinite horizon. (2) \textbf{Channel allocation}. Most works, e.g.,~\cite{kadota2018optimizing,talak2020improving,li2019general}, consider fixed channel for each user and schedule user sequence only.  To fully exploit the channel resources, we consider dynamic scheduling of channel-user pairs. Each user uses at most one channel and each channel can serve at most one user. (3) \textbf{Channel characteristics.} The information transmission can fail due to channel errors. We model the transmission successes of the channels as independent Bernoulli variables. However, the statistics of the success rates are unknown and have to be estimated online by aggregating the transmission results. All the three aspects are critical for real-world applications. While prior works addressed one or two aspects, our work address the three simultaneously.

In a nutshell, we aim to minimize an average user-dependent cost by scheduling multiple channels for multiple users without knowing the exact transmission success rate. This problem can be formulated as a Markov decision process (MDP) with environmental uncertainty and is thus conceptually solvable with reinforcement learning (RL) algorithms~\cite{sutton2020reinforcement}. However, this is \textbf{computationally intractable} because the state space grows exponentially with respect to the number of users. The exponential space complexity further exacerbates the learning sample efficiency in presence of unknown channel statistics because the sample complexity of RL algorithms is at least proportional to the size of the state space~\cite{sidford2018near}. A popular method for conquering the exponential complexity due to state space coupling is to develop a Whittle's index policy~\cite{whittle1988restless} as it is done in~\cite{kadota2018optimizing}. However, the index policy requires the action space to a binary one and thus only applicable for single channel allocation (use the channel or not).

We propose an efficient scheduling algorithm which conquers the computation challenges and is applicable for allocating the channels. We extract MDPs with binary actions by fixing the channels and users. Through rigorous analysis, we prove that the extracted MDPs are indexable and develop efficient algorithms to compute and learn the corresponding Whittle's index. Our contributions are as follows.
\begin{enumerate}
	\item We propose a novel method to decompose the scheduling problem for a multi-channel multi-user scenario by sequentially allocating channels to different users.
	\item We prove a set of structural properties of the extracted MDPs from the decomposition. We show that the decomposed MDPs are indexable and enable Whittle's index policy for the multi-channel multi-user problem. Before our work, Whittle's index policy cannot be used to schedule heterogeneous channels with different success probabilities to multiple users.
	\item Based on our theoretical analysis, we develop efficient algorithms (Algorithm~\ref{alg: index learning} and eqn.~\eqref{eq: analytic index}) to compute and learn Whittle's indices. In particular, our derived eqn.~\eqref{eq: analytic index} express Whittle's index in an analytic form, which significantly reduces computation overhead and sample complexity in a learning setup.
\end{enumerate}

\textit{Related works}. Using Whittle's index for minimizing AoI has been active recently. Kadota et al.~\cite{kadota2018optimizing} use RMAB to reduce weighted AoI among multiple nodes and analyze its performance optimality gap. Tripathi and Modiano~\cite{tripathi2019whittle} derive a closed-form expression for Whittle's index for monotonic AoI costs. Our closed-form expression eqn.~\eqref{eq: analytic index} reduces to the result in \cite{tripathi2019whittle} when the transmission cost is zero. Our result expresses the AoI cost in its expectation form, which is easy for a Monte-Carlo estimation. Zou et al. \cite{zou2021minimizing} consider a setup similar to ours. They also consider AoI minimization for a multi-channel system with different channel qualities among nodes.  They propose a partial index for direct channel allocation and show that the corresponding policy attains asymptotic optimality. By contrast, we adopt a decomposition approach, which splits the $M$-channel allocation problem into $M$ bandits. To address unknown channels, some recent works~\cite{avrachenkov2022whittle,nakhleh2021neurwin,killian2021q} develop reinforcement learning algorithms to find the Whittle's index. These works develop generic learning algorithms, which need to handle the exploration-exploitation trade-off. We leverage the special structure of the AoI minimization problem to design more efficient learning algorithms (see points 2 and 3 in our contribution).

\section{Problem Setup}
We describe the problem setup, the corresponding Markov decision process (MDP) formulation, and discuss the technical challenges in this section.

\begin{figure}[t]
	\centering
	\includegraphics[width=0.4\textwidth]{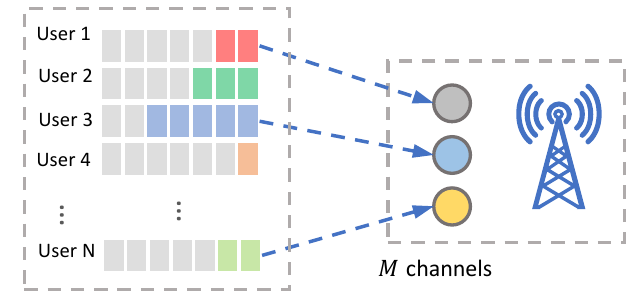}
	\caption{System diagram of the channel allocation problem for uplink scheduling. We consider $N$ users transmitting data to an AP through $M$ channels, where $M<N$. In each decision epoch, every channel can only bear at most one user and one user can use at most one channel. The transmission success rate $\rho_m,~m\in\{0,\dots,M\}$ is fixed but unknown for each channel.}
	\label{fig: system}
\end{figure}

\subsection{System Overview}
We consider uplink scheduling for multi-users with multi-channels as depicted in Fig.~\ref{fig: system}. There are $N$ users and $M$ channels, where $N>M$. Each user has an exclusive network application, which needs to regularly upload time-sensitive information to an access point (AP) by transmitting data through a selected channel. A centralized controller schedules the channel selection in a discrete-time domain. The transmission successes of each channel follow independent time-homogeneous Bernoulli processes. The means of the Bernoulli variables, however, are unknown. Each user incurs an information holding cost, which is increasing in the Age-of-Information (\textbf{AoI}, \emph{the time elapsed since the latest successful transmission}) of the user. Moreover, each channel incurs a transmission energy cost when it is scheduled. The goal of the controller is to find a scheduling policy to minimize the time-averaged holding costs and transmission costs for all users and channels over an infinite horizon.

\emph{Notations}. The set of real numbers is $\mathbb{R}$. We use $k\in\{1,2,\dots,\}$ to number the time epochs, $n\in\{1,2,\dots,N\}$ to number the users and $m\in\{0,1,\dots,m\}$ to number the channels, where $m=0$ stand for not using any channel. We use $\Pr(\cdot)$, $\mE[\cdot]$ and $\bm{1}(\cdot)$ to denote probability, mathematical expectation, and the indicator function, respectively.

\subsection{MDP Formulation}
We formulate the scheduling problem with an MDP, which consists of a quadruple $(\mathbb{S},\mathbb{A},\Pr(s'|s,a),c(s,a))$. The meaning of each notation is as follows.

\textbf{State space} $\mathbb{S}$ and \textbf{action space} $\mathbb{A}$. We denote $s_n[k]\in\{1,2,\dots,S_n\}=:\mathbb{S}_n$ as the AoI of user $n$ at time $k$. The state space $\mathbb{S}:=\{s \, : \, s=(s_1,\dots,s_N)\}$ is the Cartesian product of the sets of possible AoIs for all users. We denote $a_n[k]\in\{0,1,\dots,M\}$ as the channel selection for user $n$ at time $k$. Let $a_n[k]=m$ stand for selecting channel $m$ for user $n$ when $m>0$ and $a_n[k]=0$ for no transmission. The action space is the Cartesian product of all admissible actions for all users.

\textbf{Admissible actions.} In every decision epoch, each channel can accommodate at most one user and each user is only allowed to transmit data through at most one channel. Formally, these constraints are
\begin{subequations}\label{eq: constraint}
	\begin{align}
		&\sum_{n=1}^N \bm{1}(a_n[k]=m) \leq 1,\quad \forall m,k, \label{eq: constraint1}\\
		\text{and}~&\sum_{m=1}^M \bm{1}(a_n[k]=m) \leq 1, \quad \forall n,k. \label{eq: constraint2}
	\end{align}
\end{subequations}

\textbf{State transition probability}
Let a binary random variable $\gamma_m$ denote whether the transmission is successful or not. The transmission result follows a Bernoulli distribution,
\begin{align*}
    \Pr(\gamma_m=z \mid a_{n}=m) = \begin{cases}
        \rho_m, & z=1, \\
        1-\rho_m, & z=0.
    \end{cases}
\end{align*}
Let $s$, $a$ and $s'$ stand for the current state, current action, and the state at the next time epoch. The joint state transition probability is 
\begin{align*}
	\Pr(s' \mid s,a) = \prod_{n=1}^N \Pr(s'_n \mid s_n, a_n),
\end{align*}
where state transition probability for each process is\footnote{We assumed a finite state space. When $s_n=S_n$ and the AoI update for user $n$ fails, the event $s_n'=s_n+1$ becomes $s_n'=S_n$. For example, when $s_n[k]=S_n$ and $a_n[k]\gamma_m[k]=0$, then $s_n[k+1]=S_n$ instead $S_n+1$.}
\begin{multline*}
    \Pr(s'_n \mid s_n, a_n)=
    \begin{cases}
    \rho_m, &a_{n}=m \text{~and~} s_n'=1,\\
    1-\rho_m, &a_{n}=m \text{~and~} s_n'=s_n'+1,\\
    1, & a_{n}=0 \text{~and~} s_n'=s_n'+1.
    \end{cases}
\end{multline*}

\textbf{Stage-wise cost} $c(s,a)$. In each time epoch, the whole system incurs an aggregated costs of each user, i.e., $c(s,a)=\sum_{i=n}^N c_n(s_n,a_n)$. Each user incurs a state-dependent holding cost $h_n: \mathbb{S}_n \to \mathbb{R}$ and a channel-dependent transmission cost $\tau_m\in\mathbb{R}$. That is, the holding cost depends on both the process type and the current state of the process while the transmission cost depends only on the selected channel. We assume that $h_n(\cdot)$ is an increasing function in $s_n$. The overall cost for one single process in each decision epoch is
\begin{align*}
	c_{n}(s_n,a_n) = h_n(s_n) + \sum_{m=1}^M \tau_m \cdot\bm{1}(a_{n}=m).
\end{align*}

\subsection{Optimization Problem}
We are interested in finding a stationary and deterministic scheduling policy $\pi: \mathbb{S} \to \mathbb{A}$, which is a mapping from the state space to the action space, such that the average cost over an infinite horizon is minimized while the channel selection constrains~\eqref{eq: constraint} are satisfied for every time epoch.
\begin{problem}{Channel-user scheduling problem.}
	\begin{align*}
		\min_\pi \quad &\lim_{K\to\infty} \frac{1}{K}\sum_{k=1}^K \sum_{n=1}^N \mE_{\pi} \Big[c_n(s_n[k],a_n[k])\Big] \\
		\text{s.t.} \quad & \pi~\text{satisfies}~\eqref{eq: constraint1} \text{~and~} \eqref{eq: constraint2}.
	\end{align*}
\end{problem}

\subsection{Analysis of Challenges}
Solving the exact optimal solution is intractable due to computational challenges and unknown $\rho_m$. We resort to deriving simple yet effective heuristics instead.

\textbf{Complexity Analysis}.
The whole system corresponds to a composite MDP. The size of state space $\vert \mathbb{S} \vert = S^N$ (assuming $S_n=S$ for all $n$) grows exponentially in the number of users, and the size of the action space is the permutation of $M$ and $N$ as $\vert \mathbb{A} \vert = M!\binom{N}{M}$, where $\binom{\cdot}{\cdot}$ stands for the binomial coefficient. The state and action space complexity narrows the applicability of an exact solution algorithm (e.g., policy iteration, value iteration, linear programming~\cite{puterman1994markov}), which stores the state transition probability and reward function, whose space complexities are $O(\vert \mathbb{S} \vert^2 \cdot \vert \mathbb{A} \vert )$ and $O(\vert \mathbb{S} \vert \cdot \vert \mathbb{A} \vert )$, respectively. Indeed, our computing device with 16GB RAM memory encountered an \texttt{out-of-memory} error when applying a policy iteration algorithm to solve the optimal scheduling policy for $N=4$, $M=2$ and $S=10$ .

\textbf{Unknown $\rho_m$}. Besides excessive computation overhead, the unavailability of $\rho_m$ also prohibits the direct application of neither exact solution algorithms nor model-based reinforcement learning (RL) algorithms (exact solution algorithms with an estimated model). Model-free RL algorithms can leverage a neural network approximator to avoid storing the whole state space. However, the incurred sample complexity (number of interactions with the environment for convergence guarantee) is proportional to the state space size~\cite{sidford2018near}, which may require too many trials and thus deteriorate the performance of online deployment. Nevertheless, only the transmission success rates $\rho_m$ are unknown, which requires significantly less sample for estimation. This structure motivates us to investigate more efficient scheduling algorithms.

\textbf{Heuristics}.
An intuitive and computationally efficient heuristic is to match high-cost users with good channels as outlined in Algorithm~\ref{alg: holding cost first}. The user costs $q_n(\cdot)$ can either be the holding cost $h_n(\cdot)$ or the AoI $s_n$. The quality of channels is measured by the estimated transmission success rate from past channel transmission results. This type of algorithm leverages the problem structure to attain a small sample complexity. Moreover, they incur small computation overheads because they decouple the evaluation of the user importance. However, they are myopic as they ignore how the whole system evolves due to channel selection. We aim to improve these algorithms by designing a metric to evaluate the long-term importance of each user separately. Our method is motivated by Whittle's index policy~\cite{whittle1988restless}, which approximates a large MDP with multiple smaller MDPs.

\begin{algorithm}[t]
	\caption{Template for heuristic scheduling}
	\begin{algorithmic}[1]
		\State \textbf{Input:} state costs $q_n(s_n)$, $n=1,\dots,N$\\ \qquad and estimated channels $\hat{\rho}_m$, $m=1,\dots,M$
		\State \textbf{Output:} action vector $a$
		\State \texttt{sort(users,\,descend,\,key=$q_n(s_n)$)} \label{algline: sort}
		\State \texttt{sort(channels,\,descend,\,key=$\hat{\rho}_m$)}
		\State $a\gets$ \texttt{ZeroVector(N)}
		\For{$i$ in $1,\dots,M$}
		\State $n \gets \texttt{SortedUsers}[i]$
		\State $m \gets \texttt{SortedChannels}[i]$
		\State $a[n] \gets m$
		\EndFor
	\end{algorithmic}
 \label{alg: holding cost first}
\end{algorithm}

\section{Single Process Analysis and Indexability}

We leverage relaxation and decomposition to extract small MDPs from the large MDP. We prove that the extracted MDPs have nice optimal policy structure and thus are indexable, which enable us to design index-based scheduling heuristics in the next section.

\subsection{Whittle's Index and Arm Construction}

\textbf{RMAB and Whittle's Index.} Whittle~\cite{whittle1988restless} proposed an index-based heuristic for solving the restless multi-armed bandit (RMAB) problem. The RMAB problem considers minimizing the aggregated costs of $N$ arms specified MDPs. Each arm's state transits in either an active or a passive mode determined by the action. The arms are weekly coupled because at most $M$ arms can be activated in each decision epoch. Similar to our problem, the optimal policy of RMAB is computationally intractable due to exponential dependency on the number of arms. Whittle developed an algorithm to schedule the arms based on index values which are assigned independently to the arm's state through offline computation. Since the index computation is performed for each arm independently, the exponential complexity reduces to a linear one.

\textbf{Constructing arms.} Whittle's approach is not directly applicable to our problem since the action space of each user is more than two for $M>1$. We tackle this problem by constructing $M$ RMABs with fixed channels. For each fixed channel, we construct an RMAB with $N$ arms and at most one arm can be activated in each decision epoch. We, therefore, construct $M \times N$ arms. For a fixed channel $m$ and user $n$, the constructed arm is characterized by an MDP $(\tilde{\mathbb{S}}_n, \tilde{\mathbb{A}}_n, \tilde{\Pr}(\tilde{s}_n'\mid\tilde{s}_n,\tilde{a}_n), \tilde{c}_n(s_n,a_n))$, where the state space remains the same as $\tilde{\mathbb{S}}_n=\mathbb{S}_n$ and the action space becomes binary $\tilde{\mathbb{A}}_n=\{0,1\}$, where $\tilde{a}_n=1$ means activation and $\tilde{a}_n=0$ otherwise. The state transitions are
\begin{align*}
	\tilde{\Pr}(\tilde{s}'_n \mid\tilde{s}_n,\tilde{a}_n) = 
    \begin{cases}
	\rho_m, &\tilde{a}_{n}=1 \text{~and~} \tilde{s}_n'=1,\\
	1-\rho_m, &\tilde{a}_{n}=1 \text{~and~} \tilde{s}_n'=\tilde{s}_n'+1,\\
	1, & \tilde{a}_{n}=0 \text{~and~} \tilde{s}_n'=\tilde{s}_n'+1.
\end{cases}
\end{align*}
The single stage cost is $\tilde{c}(\tilde{s}_n,\tilde{a}_n)=h_n(\tilde{s}_n)+\tau_m \cdot \tilde{a}_n$. For notation simplicity, we abuse notations by skipping the superscript $\tilde{\cdot}$ and the subscripts $n$ and $m$ when the meaning is clear from the context. That is, we overload the notation $(\mathbb{S},\mathbb{A},\Pr(s'|s,a),c(s,a))$ to represent one extracted MDP.

\subsection{Index and Indexability}

\textbf{Index computation.} The index $\nu:\mathbb{S}\to\mathbb{R}$ assigns a real number for an arm's state measuring the long-term performance difference between being active and passive at the state. We add a \textbf{virtual cost} $\lambda$ for being active to the individual arm's MDP optimization objective as below.
\begin{problem}{Extracted arm with a virtual cost $\lambda$.}
	\begin{align*}
		\min_\pi \quad &J(\pi,\lambda):=\lim_{K\to\infty} \frac{1}{K}\sum_{k=1}^K  \mE_\pi \Big[c(s[k],a[k])+\lambda \cdot a[k]\Big]\\
		\text{s.t.} \quad & a[k]=\pi(s[k]).
	\end{align*}
\end{problem}
For each $\lambda$, we have a separate MDP and a $\lambda$-dependent $Q$-factor that satisfies the Bellman optimality equation
\begin{align*}
	Q^\star_\lambda(s,a) = c(s,a) + \lambda a - J^\star + \mE_{s'\sim \Pr(\cdot|s,a)}[\min_{a'}Q^\star_\lambda(s',a')],
\end{align*}
where $J^\star$ is the minimum average cost and the $Q$-factor stands for the relative average reward of taking action $a$ for state $s$ under the optimal policy. \textbf{The index of state} $s$ is the virtual cost that equalizes $a=0$ and $a=1$ for the state, i.e.,
\begin{align}\label{eq: definition of index}
	\nu(s)=\lambda,	\quad \text{if~} Q^\star_\lambda(s,a=0) = Q^\star_\lambda(s,a=1).
\end{align}

\textbf{Indexability.} To make the index meaningful, Whittle~\cite{whittle1988restless} requires a consistent ordering for the arms with respect to the virtual cost $\lambda$. This is characterized by the following indexability condition.
\begin{definition}[Indexability]
	Let $D(\lambda)$ be the set of states where it is optimal to be passive, i.e., $D(\lambda):=\{s \, : \, \pi^\star(s)=0\}$. An MDP with virtual cost $\lambda$ is indexable if $\sigma$ increases monotonically from the empty set $\emptyset$ to the whole state space $\mathbb{S}$ as $\lambda$ increases from $-\infty$ to $+\infty$.
\end{definition}

Our extracted MDP is indeed indexable and the indices can be computed efficiently due to its nice structural properties. In the next subsection, we establish these theoretical results.

\subsection{Proof of Indexability}
The indexability condition requires a monotonic structure of the optimal policy with respect to the virtual cost $\lambda$. We will show that there is a threshold-type optimal policy for fixed $\lambda$ and the threshold increases as $\lambda$ increases. Therefore, the passive set increases as $\lambda$ increases.
\begin{lemma}[Optimality of threshold policy]\label{lemma: optimality of threshold policy}
	There exists $\theta^\star \in \mathbb{S}\cup\{S+1\}$ such that an optimal policy takes a threshold form
	\begin{align*}
		\pi^\star(s) = a = \begin{cases}
			1, & s \geq \theta^\star,\\
			0, & s < \theta^\star.
		\end{cases}
	\end{align*}
\end{lemma}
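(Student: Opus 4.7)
The plan is to exploit the special structure of the extracted arm by showing that the optimal relative value function $V$ is monotone non-decreasing in the AoI $s$; once this is established, the binary action structure of the Bellman equation immediately forces the optimal policy to have the claimed threshold form. Throughout I would work with the average-cost Bellman optimality equation
\begin{equation*}
V(s)+J^\star=h(s)+\min\Bigl\{V(s{+}1),\ \tau+\lambda+\rho V(1)+(1-\rho)V(s{+}1)\Bigr\},
\end{equation*}
whose bounded solution exists by standard finite-state average-cost MDP theory: the induced chain is unichain under any stationary policy, since state $1$ is recurrent whenever the active action is ever chosen, and the capping at $S$ keeps the state space finite.

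The crux is to show $V(s{+}1)\ge V(s)$ for all $s$. I would proceed by induction on a value-iteration scheme (either on the $\beta$-discounted problem followed by a vanishing-discount limit $\beta\uparrow 1$, or on relative value iteration directly), initialised at $V^{(0)}\equiv 0$ and updated by the contraction
\begin{equation*}
V^{(k+1)}(s)=\min_{a\in\{0,1\}}\bigl[c(s,a)+\lambda a+\mathbb{E}[V^{(k)}(s')\mid s,a]\bigr].
\end{equation*}
Assuming $V^{(k)}$ is non-decreasing, the passive one-step cost $h(s)+V^{(k)}(s{+}1)$ is non-decreasing in $s$ since both $h$ and $V^{(k)}$ are, and the active one-step cost $h(s)+\tau+\lambda+\rho V^{(k)}(1)+(1-\rho)V^{(k)}(s{+}1)$ is non-decreasing in $s$ for the same reason (the $\rho V^{(k)}(1)$ term is constant in $s$). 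Taking the pointwise minimum preserves monotonicity, so $V^{(k+1)}$ is non-decreasing, and passing to the limit yields monotonicity of $V$.

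With monotonicity in hand, the rest is short. Computing the Q-factor difference directly from the Bellman equation gives
\begin{equation*}
Q^\star_\lambda(s,1)-Q^\star_\lambda(s,0)=\tau+\lambda-\rho\bigl[V(s{+}1)-V(1)\bigr],
\end{equation*}
which is non-increasing in $s$. Hence the set $\{s:Q^\star_\lambda(s,1)\le Q^\star_\lambda(s,0)\}$ is an upper subset of $\mathbb{S}$; defining $\theta^\star$ as its smallest element (and $\theta^\star=S+1$ if the set is empty) produces exactly the threshold policy claimed in the lemma, with ties broken in favour of the active action.

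The main obstacle is the monotonicity induction, and in particular handling the AoI cap at $S$: the nominal transition $s'=s{+}1$ must be read as $s'=S$ when $s=S$, and one must verify that this boundary convention does not break the inductive step (it only flattens the iterate at the top, which is still compatible with the non-decreasing property). A secondary technicality is pinning down the existence of a bounded solution to the average-cost Bellman equation, which I would handle via the standard vanishing-discount argument on this finite-state unichain MDP, where the span of $V_\beta$ is bounded uniformly in $\beta<1$.
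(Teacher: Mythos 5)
Your proof is correct, but it takes a more elementary and self-contained route than the paper. The paper disposes of this lemma in three lines by verifying the monotonicity and subadditivity (supermodularity) hypotheses of a general monotone-optimal-policy theorem \cite[Theorem 8.11.3]{puterman1994markov} applied to the stage cost $c(s,a)+\lambda a$ and the tail transition kernel $T(z|s,a)=\sum_{s'\ge z}\Pr(s'|s,a)$; the threshold form then follows because the action set is binary. You instead re-derive the special case from first principles: monotonicity of the optimal relative value function by induction on value iteration (plus a vanishing-discount limit), followed by the explicit computation $Q^\star_\lambda(s,1)-Q^\star_\lambda(s,0)=\tau+\lambda-\rho[V(s{+}1)-V(1)]$, which is non-increasing in $s$ and hence yields an upper-set active region. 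The two arguments are essentially the same mathematics — Puterman's conditions are precisely what make your induction and your $Q$-difference monotone — but yours has the advantage of being verifiable without chasing the cited theorem's hypotheses (and of exhibiting the $Q$-difference formula that the paper later reuses in Theorem~\ref{theorem: optimality at boundary}), while the paper's is shorter and delegates the measure-theoretic and average-cost technicalities to a standard reference. Notably, your induction-plus-vanishing-discount technique is the same one the paper uses in Lemma~\ref{lemma: monotonicity of V} for the fixed-policy value function, so your argument fits naturally into the paper's toolkit. One small imprecision: your claim that state $1$ is recurrent whenever the active action is ever chosen is not quite right — a policy that is active at some interior state but passive at $S$ still has $\{S\}$ as its unique absorbing class — but the chain is unichain under every stationary policy in all cases, which is all your vanishing-discount argument needs, so this does not affect the validity of the proof.
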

\begin{proof}
Note that both the single stage cost $c(s,a)+\lambda\cdot a$ and the tail state transition probability $T(z|s,a):=\sum_{s'=z}^\infty \Pr(s'|s,a)$ satisfies the conditions (monotonicity and subadditivity) in \cite[Theorem 8.11.3]{puterman1994markov}\footnote{Although the original theorem addresses maximizing the average reward, the proof is readily extendable to our average cost minimization problem.}. Therefore, the optimal policy is monotonically increasing in $s$, that is, $\pi^\star(s')\geq \pi^\star(s)$ if $s'\geq s$. Since there are only two actions, the montonic optimal policy reduces to the threshold policy.
\end{proof}

In the sequel, for notation simplicity, we use $\theta$ to refer to a threshold policy $\pi_\theta$ when the meaning is clear from the context. For example, we respectively denote the average cost, the stationary state distribution, and the $Q$-factor, under the policy $\pi_\theta$ as $J(\theta,\lambda):=J(\pi_\theta,\lambda)$, $d^\theta(z) := \Pr(s=z \mid \pi_\theta)$, and
\begin{equation}\label{eq: definition of Q theta}
	\begin{aligned}
		Q^\theta_\lambda(s,a) := c(s,a)&+\lambda a - J(\theta,\lambda) \\
		&+ \mE_{s'\sim\Pr(\cdot|s,a)}[Q_\lambda^\theta(s',\pi_\theta(s'))].
	\end{aligned}
\end{equation}

\begin{lemma}[Monotonicity of optimal thresholds]\label{lemma: montone optimal policy}
	The optimal threshold $\theta^\star$ is nondrecreasing in $\lambda$. That is, $\theta^\star (\lambda') \geq \theta^\star (\lambda)$ if $\lambda' \geq \lambda$.
\end{lemma}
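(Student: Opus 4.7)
The plan is to exploit the fact that for a fixed threshold $\theta$, the long-run average cost splits cleanly into a $\lambda$-independent part and a linear-in-$\lambda$ part whose slope encodes how often the active action is taken. Specifically, under a threshold policy $\pi_\theta$ the chain $(s[k])$ is ergodic, and writing the average cost in terms of the stationary distribution $d^\theta$ gives
\begin{equation*}
J(\theta,\lambda) \;=\; g(\theta) + \lambda\, p(\theta),\qquad p(\theta):=\sum_{s\ge\theta} d^\theta(s),
\end{equation*}
where $g(\theta)$ collects the holding cost and the $\lambda$-independent part of the transmission cost, and $p(\theta)$ is the stationary probability of activation. Thus $\lambda\mapsto J(\theta,\lambda)$ is affine with slope $p(\theta)$, and $\theta^\star(\lambda)$ is the minimizer of a family of affine functions indexed by $\theta$. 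This is exactly the setting where Topkis-style monotone comparative statics applies: the minimizer is nondecreasing in $\lambda$ provided the slope $p(\theta)$ is nonincreasing in $\theta$.

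Next I would establish strict monotonicity of $p(\theta)$. The chain induced by $\pi_\theta$ is easy to describe: from any state $s<\theta$ it deterministically moves to $s+1$; from any $s\ge\theta$ it jumps to $1$ with probability $\rho$ and otherwise advances by one (with the boundary convention at $S$). A regenerative argument around the visits to state $1$ shows that the expected cycle length is $(\theta-1)+1/\rho$ (plus a boundary correction when $\theta$ is close to $S$), and the expected time per cycle spent in the active region $\{s\ge\theta\}$ is $1/\rho$, yielding a clean expression such as $p(\theta)=1/(1+\rho(\theta-1))$ in the non-truncated regime. The key point, which I would verify carefully including at the truncation boundary, is that $p(\theta)$ is \emph{strictly} decreasing in $\theta\in\mathbb{S}\cup\{S+1\}$ whenever $\rho\in(0,1]$.

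With these two ingredients, the monotonicity of $\theta^\star$ follows from a one-line exchange argument. Take $\lambda<\lambda'$ and let $\theta_1\in\theta^\star(\lambda)$, $\theta_2\in\theta^\star(\lambda')$. By optimality,
\begin{equation*}
g(\theta_1)+\lambda p(\theta_1)\le g(\theta_2)+\lambda p(\theta_2),\qquad
g(\theta_2)+\lambda' p(\theta_2)\le g(\theta_1)+\lambda' p(\theta_1).
\end{equation*}
Adding the two inequalities cancels the $g$-terms and gives $(\lambda'-\lambda)\bigl(p(\theta_2)-p(\theta_1)\bigr)\le 0$, so $p(\theta_2)\le p(\theta_1)$; strict monotonicity of $p$ then forces $\theta_2\ge\theta_1$, which is the desired claim.

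The main obstacle is the middle step: verifying strict monotonicity of the activation probability $p(\theta)$ uniformly across the state space, including the boundary case where $\theta$ approaches $S$ and the self-loop at $s=S$ distorts the clean regenerative formula, as well as the degenerate threshold $\theta=S+1$ for which $p(\theta)=0$. Everything else (the affine decomposition of $J$ and the swap argument) is essentially mechanical, so the technical weight of the proof sits on the explicit analysis of the stationary distribution of the threshold-induced chain.
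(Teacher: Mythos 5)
Your proposal is correct and follows essentially the same route as the paper: both decompose $J(\theta,\lambda)$ into a $\lambda$-independent term plus $\lambda$ times the stationary activation frequency, note that this slope is decreasing in $\theta$ (your regenerative computation $p(\theta)=1/(1+\rho(\theta-1))$ matches the paper's stationary-distribution calculation), and conclude via a standard monotone-comparative-statics argument. The only cosmetic difference is that you add the two optimality inequalities symmetrically, whereas the paper uses the subadditivity inequality to compare $\theta^\star(\lambda)$ against all smaller thresholds at $\lambda'$; these are the same Topkis-type step.
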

\begin{proof}
	For a fixed threshold $\theta$, the average cost of a user can be composed as a holding cost $\mathcal{J}_h(\theta)$ plus a transmission cost $\mathcal{J}_\tau(\theta)$, i.e., $J(\pi_\theta,\lambda)=:J(\theta,\lambda)=\mathcal{J}_h(\theta)+(\lambda+\tau) \mathcal{J}_\tau(\theta)$. It can be proven (shown in Section~\ref{subsec: analytic index}) that $\mathcal{J}_\tau(\theta)=1-\frac{\theta}{\theta-1+1/\rho}$, which is montonically decreasing in $\theta$. Assuming that $\theta'>\theta$ and $\lambda'\geq \lambda$, we can obtain
	\begin{multline}\label{eq: subadditive}
		J(\theta',\lambda') - J(\theta',\lambda) = (\lambda'-\lambda)\mathcal{J}_\tau(\theta')\\
		\leq  (\lambda'-\lambda)\mathcal{J}_\tau(\theta) = J(\theta,\lambda') - J(\theta,\lambda).
	\end{multline}
	Now choose an arbitrary $\xi$ such that $\xi \leq \theta^\star(\lambda)$. We can derive
	\begin{align*}
		J(\theta^\star(\lambda),\lambda') - J(\theta^\star(\lambda),\lambda) \leq 
		J(\xi,\lambda') - J(\xi,\lambda).
	\end{align*}
	Note that $J(\theta^\star(\lambda),\lambda) \leq J(\xi,\lambda)$. Therefore,
	\begin{align*}
		J(\theta^\star(\lambda),\lambda') \leq J(\xi,\lambda'), ~\forall \xi.
	\end{align*}
	Consequently, $\theta^\star(\lambda') \geq \theta^\star(\lambda)$. This completes the proof.
\end{proof}

By combining Lemma~\ref{lemma: optimality of threshold policy} and~\ref{lemma: montone optimal policy}, the indexability of the extracted MDP is straightforward.
\begin{theorem}[Indexability]\label{theorem: indexability}
	For a fixed channel and a fixed user, the corresponding MDP is indexable.
\end{theorem}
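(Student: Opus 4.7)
The plan is to assemble Theorem~\ref{theorem: indexability} directly from the two preceding lemmas, plus a short argument for the boundary behavior of the passive set as $\lambda \to \pm\infty$. First, I would invoke Lemma~\ref{lemma: optimality of threshold policy} to rewrite the passive set in closed form: since the optimal policy is of threshold type with threshold $\theta^\star(\lambda)$, we have
\begin{equation*}
	D(\lambda) = \{s \in \mathbb{S} \, : \, s < \theta^\star(\lambda)\} = \{1,2,\ldots,\theta^\star(\lambda)-1\},
\end{equation*}
with the convention that $D(\lambda) = \emptyset$ when $\theta^\star(\lambda)=1$ and $D(\lambda) = \mathbb{S}$ when $\theta^\star(\lambda) = S+1$. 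This reduces the set-valued monotonicity requirement to the scalar monotonicity of $\theta^\star(\lambda)$.

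Next, I would apply Lemma~\ref{lemma: montone optimal policy}: if $\lambda' \geq \lambda$ then $\theta^\star(\lambda') \geq \theta^\star(\lambda)$, and the displayed characterization of $D(\lambda)$ immediately gives $D(\lambda) \subseteq D(\lambda')$. So the passive set is monotonically nondecreasing in $\lambda$, which is the core of the indexability definition.

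It remains to verify the two limit behaviors $D(-\infty)=\emptyset$ and $D(+\infty)=\mathbb{S}$. For $\lambda$ sufficiently negative, the single-stage cost $c(s,a)+\lambda a$ is strictly minimized by $a=1$ for every $s$, and since the per-stage gain $-\lambda$ from activation can be made to dominate any bounded difference in the continuation value (the state space and the holding cost $h_n$ are both bounded), the optimal policy must activate in every state, giving $\theta^\star(\lambda)=1$ and $D(\lambda)=\emptyset$. Symmetrically, for $\lambda$ sufficiently large, $\lambda$ dominates the maximum holding cost $\max_s h_n(s)$ plus $\tau_m$, so passivity is strictly preferable at every state, giving $\theta^\star(\lambda)=S+1$ and $D(\lambda)=\mathbb{S}$.

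The main obstacle I anticipate is being careful about these two limit arguments: one must confirm that the bounded state space and bounded stage cost indeed force the trivial policies for extreme $\lambda$, rather than leaving a pathological regime where the thresholds stagnate strictly inside $\mathbb{S}$. Once that is handled, combining monotonicity with the two endpoint values completes the proof of indexability.
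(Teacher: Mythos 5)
Your proposal is correct and follows essentially the same route as the paper: Lemma~\ref{lemma: optimality of threshold policy} identifies the passive set as $\{s<\theta^\star(\lambda)\}$, Lemma~\ref{lemma: montone optimal policy} makes that set nondecreasing in $\lambda$, and the extreme values of $\lambda$ force the trivial policies. You are in fact slightly more careful than the paper, which only checks the $\lambda\to-\infty$ endpoint (via $\lambda\leq-\tau$) and leaves the $\lambda\to+\infty$ endpoint implicit; your boundedness argument for that side is sound given the finite state space and bounded holding costs.
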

\begin{proof}
	Lemma~\ref{lemma: optimality of threshold policy} shows that $D(\lambda)=\{s:s\leq \theta^\star(\lambda)\}$. Lemma~\ref{lemma: montone optimal policy} shows that $\theta^\star(\lambda')\geq\theta^\star(\lambda)$ if $\lambda'\geq \lambda$. Therefore, $D(\lambda')\subseteq D(\lambda)$. Moreover, when $\lambda\leq-\tau$, $D(\lambda)=\emptyset$. This completes the proof.
\end{proof}

Apart from indexability, we find that the extracted MDP has another useful structural property for solving optimal policies.
\begin{theorem}[Optimality for coincident point]\label{theorem: optimality at boundary}
	If $J(\theta,\lambda) = J(\theta+1,\lambda)$, both $\theta$ and $\theta+1$ are optimal for $\lambda$, i.e.,
	$\{\theta,\theta+1\} \subseteq \argmin_{\xi} J(\xi,\lambda)=\theta^\star(\lambda)$ and the $\lambda$ at the coincident point is the Whittle's index for state $\theta$ .
\end{theorem}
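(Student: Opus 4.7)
The plan is to combine a single-state performance-difference identity, the affine-in-$\lambda$ structure of $J(\xi,\lambda)$ already used in Lemma~\ref{lemma: montone optimal policy}, and the monotonicity of the optimal threshold from that same lemma, to deduce both $\{\theta,\theta+1\}\subseteq\argmin_\xi J(\xi,\lambda)$ and $\nu(\theta)=\lambda$. The first observation is that $\pi_\theta$ and $\pi_{\theta+1}$ differ only at the single state $s=\theta$: $\pi_\theta$ activates while $\pi_{\theta+1}$ stays passive. The average-cost performance-difference formula applied to this one-state deviation gives
\[
J(\theta,\lambda)-J(\theta+1,\lambda) = d^{\theta}(\theta)\,\bigl[Q^{\theta+1}_\lambda(\theta,1) - Q^{\theta+1}_\lambda(\theta,0)\bigr],
\]
with $d^{\theta}(\theta)>0$ by positive recurrence of the induced chain. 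The hypothesis makes the left side vanish, so $Q^{\theta+1}_\lambda(\theta,0)=Q^{\theta+1}_\lambda(\theta,1)$, and the symmetric computation yields $Q^{\theta}_\lambda(\theta,0)=Q^{\theta}_\lambda(\theta,1)$. Thus at $s=\theta$ neither a switch from $\pi_\theta$ to $\pi_{\theta+1}$ nor its reverse can strictly change the cost.

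To promote this local indifference into global optimality of both thresholds, I would exploit the envelope structure. Each $J(\xi,\cdot)$ is affine in $\lambda$ with slope $\mathcal{J}_\tau(\xi)$ strictly decreasing in $\xi$ (the content driving the proof of Lemma~\ref{lemma: montone optimal policy}), so $\lambda\mapsto\min_\xi J(\xi,\lambda)$ is a concave piecewise-linear lower envelope. The strict slope ordering forces $J(\theta,\lambda')<J(\theta+1,\lambda')$ for $\lambda'$ just below the coincident $\lambda$ and the reverse just above, so Lemma~\ref{lemma: montone optimal policy} says the optimal threshold sweeps from values $\le\theta$ to values $\ge\theta+1$ across $\lambda$. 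Combined with the $Q$-factor equalities above, this transition pins $\{\theta,\theta+1\}$ inside $\argmin_\xi J(\xi,\lambda)$. The Whittle-index identification is then a one-liner: the two optimal threshold policies share the optimal value function up to an additive constant, and at $s=\theta$ they take opposite actions, so $Q^\star_\lambda(\theta,0)=Q^\star_\lambda(\theta,1)$, which by~\eqref{eq: definition of index} is exactly $\nu(\theta)=\lambda$.

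The main obstacle I anticipate lies in the middle step: while Bellman inequality tightness at $s=\theta$ follows directly from the hypothesis, ruling out a third threshold $\xi\notin\{\theta,\theta+1\}$ that strictly undercuts both lines on the concave envelope is not — Lemma~\ref{lemma: montone optimal policy} alone permits the optimal threshold to leap over either $\theta$ or $\theta+1$. I would try to close this gap by showing monotonicity of the advantage $Q^{\theta+1}_\lambda(s,1)-Q^{\theta+1}_\lambda(s,0)$ in $s$, using the same subadditivity of stage cost and tail transition probability that drove the threshold structure in Lemma~\ref{lemma: optimality of threshold policy}; that monotonicity, combined with the tight equality at $s=\theta$, promotes $\pi_{\theta+1}$ from merely \emph{locally} optimal against the $\theta$-deviation to a full solution of the Bellman optimality equation, which immediately delivers the $\argmin$ conclusion.
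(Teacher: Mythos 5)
Your proposal is correct and follows essentially the same route as the paper: both arguments rest on the performance-difference formula (Lemma~\ref{lemma: performance difference}) applied to threshold policies together with monotonicity of the fixed-policy relative value function, and the "monotone advantage" fact you propose to establish is, via $Q^{\theta}_\lambda(s,1)-Q^{\theta}_\lambda(s,0)=\lambda+\tau-\rho\,[V^{\theta}_\lambda(s+1)-V^{\theta}_\lambda(1)]$, exactly the paper's Lemma~\ref{lemma: monotonicity of V} (proved there by finite-horizon induction plus vanishing discount, not by the subadditivity conditions of Lemma~\ref{lemma: optimality of threshold policy}). The only cosmetic difference is that the paper concludes global optimality by summing the performance-difference formula between $\pi_\xi$ and $\pi_\theta$ over the states where they disagree, whereas you verify the Bellman optimality equation for $\pi_{\theta+1}$; both are valid and rely on the same two ingredients.
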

\begin{proof}
	Assume $\theta_1 > \theta_2$. Note that $\pi_{\theta_1}$ and $\pi_{\theta_2}$ take different actions at $s=\theta_1,\dots,\theta_2-1$. By the performance difference formula (\textbf{Lemma~\ref{lemma: performance difference}}), we have
	\begin{align*}
		&J(\theta_1,\lambda) - J(\theta_2,\lambda) \\
		= &\mE_{s\sim d^{\theta_1}(\cdot)}[Q_\lambda^{\theta_2}(s,\pi_{\theta_1}(s))-Q_\lambda^{\theta_2}(s,\pi_{\theta_2}(s))]\\
		= &\sum_{s=\theta_2}^{\theta_1-1} d^{\theta_1}(s)\{Q_\lambda^{\theta_2}(s,\pi_{\theta_1}(s))-Q_\lambda^{\theta_2}(s,\pi_{\theta_2}(s))\}\\
		= & \sum_{s=\theta_2}^{\theta_1-1} d^{\theta_1}(s) \{ -\lambda-\tau +\rho [V_\lambda^{\theta_2}(s+1) - V_\lambda^{\theta_2}(1)] \},
	\end{align*}
	where $V_\lambda^\theta(s) := Q_\lambda^\theta(s,\pi_\theta(s))$.
	The equality $J(\theta,\lambda)=J(\theta+1,\lambda)$ leads to
	\begin{align*}
		\lambda+\tau  = \rho [V_\lambda^{\theta}(\theta+1) - V_\lambda^{\theta}(1)].
	\end{align*}
	As $V_\lambda^\theta(s)$ is monotonically increasing in $s$ (\textbf{Lemma~\ref{lemma: monotonicity of V}}), for $s<\theta$, we obtain
	$\lambda+\tau  \geq \rho [V_\lambda^{\theta}(s+1) - V_\lambda^{\theta}(1)]$ and therefore
	\begin{align*}
		&J(s,\lambda) - J(\theta,\lambda) \\
		=& \sum_{z=s}^{\theta-1} d^{s}(z) \{ \lambda+\tau -\rho [V_\lambda^{\theta}(z+1) - V_\lambda^{\theta}(1)]  \} \geq 0.
	\end{align*}
	Also, for $s>\theta$, we obtain $\lambda+\tau  \leq \rho [V_\lambda^{\theta}(s+1) - V_\lambda^{\theta}(1)]$ and
	\begin{align*}
		&J(s,\lambda) - J(\theta,\lambda) \\
		=& \sum_{z=\theta}^{s-1} d^{s}(z) \{ -\lambda-\tau -\rho [V_\lambda^{\theta}(z+1) - V_\lambda^{\theta}(1)]  \} \geq 0.
	\end{align*}
	Therefore $J(\theta+1,\lambda)=J(\theta,\lambda)\leq J(\theta',\lambda)$ for any $\theta'$, which proves that both $\theta$ and $\theta+1$ are optimal thresholds for $\lambda$. Moreover, the performance difference formula also shows that
	\begin{align*}
		0 = J(\theta+1,\lambda) - J(\theta,\lambda) = d^{\theta+1}(\theta)\{ Q_\lambda^{\theta}(\theta,0)-Q_\lambda^{\theta}(\theta,1) \},
	\end{align*}
	which proves that $\lambda$ is the Whittle's index for state $\theta$.
\end{proof}

Fig.~\ref{fig: concave} visualizes the above theoretical results. Without loss of generality, we assume $\tau=0$. The average cost of an extracted MDP consists of the average holding cost and the average transmission cost $J(\theta,\lambda)=\mathcal{J}_h(\theta)+\lambda\mathcal{J}_\tau(\theta)$, which is a linear function (gray straight lines) in $\lambda$. The minimum cost (blue curve) for all policies is thus concave in $\lambda$. The bias $\mathcal{J}_h(\theta)$ is monotonically increasing in $\theta$ while the slope $\mathcal{J}_\tau(\theta)$ is monotonically decreasing in $\theta$. The monotonicity of the bias and the slope leads to 1) the monotonicity of the optimal thresholds (\textbf{Lemma~\ref{lemma: montone optimal policy}}) and 2) the optimality of the coincident point of two neighboring thresholds (\textbf{Theorem~\ref{theorem: optimality at boundary}}).

\begin{figure}[t]
	\centering
	\includegraphics[width=0.35\textwidth]{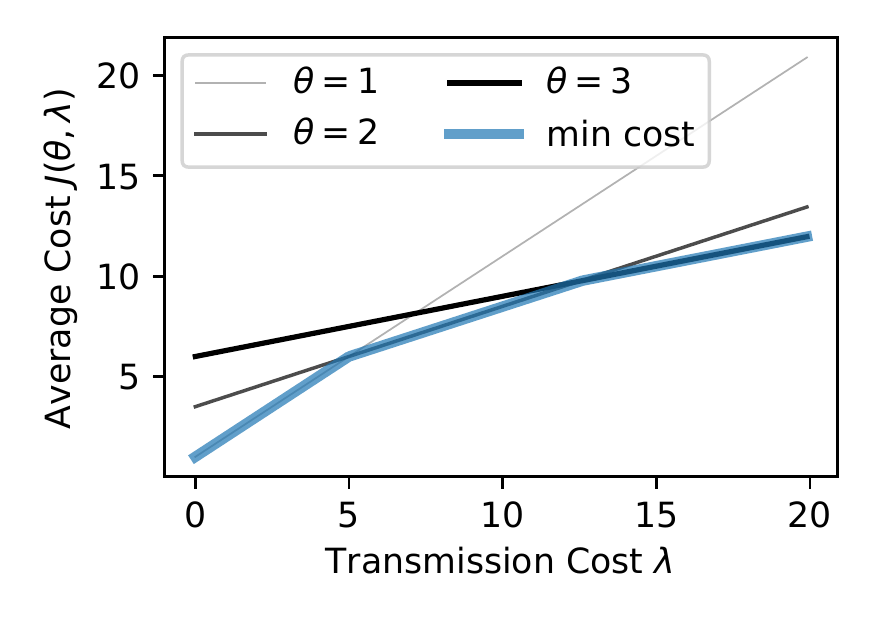}
	\vspace{-1.5em}
	\caption{Transmission costs $\lambda$ against the minimum cost of one single process under threshold policies. The gray straight lines stand for the costs of different threshold policies. The blue concave piecewise-linear curve stands for the minimum attainable costs from the threshold policies.}
	\label{fig: concave}
\end{figure}
\section{Index Learning and Scheduling Policies}
We present algorithms for computing Whittle's index. In absence of the transmission success rate, the computation needs to be performed online by interacting with the environment. We first develop a relative $Q$-learning-based algorithm for learning the indices. We leverage our theoretical results to derive an analytic formula of the indices, which yields a simpler learning algorithm than the $Q$-learning-based algorithm.

\subsection{Index Learning with Q-learning}
\textbf{The algorithm framework}.
According to Whittle's definition in eqn.~\eqref{eq: definition of index}, the Whittle's index $\nu(s)=\lambda$ if being either active and or passive incurs the same relative cost at state $s$ under the optimal policy. Let $Q_\lambda$ be the $Q$-factor for the optimal policy under an additional cost $\lambda$. We leverage a relative $Q$-learning-based algorithm to learn Whittle's index of state $\theta$ \textbf{for a fixed pair of channel and user},
\begin{subequations}\label{eq: q-learning for index}
	\begin{align}
		Q_\lambda(s,a) \gets & Q_\lambda(s,a) + \eta_Q \cdot \Big( c(s,a) + \lambda \cdot a \label{eq: q-learning for index1}\\
		&\quad  + Q_\lambda(s',\pi_\theta(s')) - Q_\lambda(s',\pi_\theta(s'))  \Big), \nonumber\\
		Q_\lambda(s,a) \gets&  Q^{\theta}_\lambda(s,a) - Q_\lambda(s_\mathrm{fixed},a_\mathrm{fixed}), \label{eq: q-learning for index2}\\
		\lambda \gets & \lambda + \eta_\lambda \cdot \Big( Q_\lambda(\theta,0) - Q_\lambda(\theta,1) \Big), \label{eq: q-learning for index3}
	\end{align}
\end{subequations}
where $s_\mathrm{fixed}$ and $a_\mathrm{fixed}$ are a pair of fixed (can be any) state and action, and
\begin{align*}
	s' = \begin{cases}
		1,  & a\cdot\gamma_m=1,\\
		s+1, & a\cdot\gamma_m =0 \text{~and~} s<S.
	\end{cases}
\end{align*}
The subtraction of the $Q$-value at a fixed state-action pair in eqn.~\eqref{eq: q-learning for index2} is necessary for ensuring the boundedness of the $Q$-factor as it is done similarly for the relative $Q$-learning algorithm~\cite{abounadi2001learning}.

\textbf{Learning for a fixed policy}. Our proposed relative $Q$-learning in eqns.~\eqref{eq: q-learning for index1}-\eqref{eq: q-learning for index2} learn the $Q$-factor for a threshold policy $\pi_\theta$ instead of the optimal policy. Our version can still learn the index for $s=\theta$ because the optimal policy is always a threshold one (\textbf{Theorem~\ref{theorem: indexability}}), and the coincident points of two neighboring threshold policies correspond to both optimal policies and the Whittle's index for the smaller threshold value (\textbf{Theorem~\ref{theorem: optimality at boundary}}). Learning the $Q$-factor for a fixed policy is preferable to learning the optimal policy because the latter incurs an additional bias from overestimation~\cite{thrun1993issues,hasselt2010double}.

\textbf{Synchronous update}. A generic $Q$-learning , e.g.,~\cite{abounadi2001learning,avrachenkov2022whittle}, performs asynchronous updates (only one pair of state-action) is update in each epoch. We can perform synchronous updates (update all state-action pairs) when using a channel to transmit data. This is feasible because the uncertainty only lies in whether the transmission succeeds. For $a=0$, the state transition is deterministic. For $a=1$, the state transitions only depend on the transmission results. Whenever channel $m$ is activated to transmit data, we can update every element of the $Q$-factor for each user as below.

\begin{algorithm}[t]
	\caption{Synchronous index learning}\label{alg: index learning}
	\begin{algorithmic}[1]
		\State \textbf{Input:} activated channels, tx. results $\gamma_m$, current $Q_\lambda$
		\State \textbf{Output:} updated $Q_\lambda$
		\For {$m$ in activated channels} \gcmt{only activated channels}
			\For {$n$ in $\{1,\dots,N\}$}
				\For {$s,a$ in $\{1,\dots,S\} \times \{0,1\}$}
					\If {$a=1$ and $\gamma_m=1$}
						\State $s' \gets 1$
					\Else
						\State $s' \gets s+1$
					\EndIf
					\State perform index learning updates in ~\eqref{eq: q-learning for index}
				\EndFor
			\EndFor
		\EndFor
	\end{algorithmic}
\end{algorithm}

\textbf{Convergence condition}. This learning algorithm adopts a double-loop framework. The inner loop eqns.~\eqref{eq: q-learning for index1}-\eqref{eq: q-learning for index2} adopt a $Q$-learning algorithm to learn the $Q$-factors under a threshold policy $\theta$. The outer loop eqn.~\eqref{eq: q-learning for index3} learns the additional cost $\lambda$ that satisfies~\eqref{eq: definition of index}. As shown in~\cite{abounadi2001learning}, for a fixed $\lambda$, the inner loop converges to a fixed point with suitable stepsizes. The outer loop converges to a fixed point under suitable stepsizes because the update quantity $Q_\lambda(\theta,0) - Q_\lambda(\theta,1)$ is decreasing in $\lambda$. To ensure joint convergence, the following set of conditions of the stepsize $\eta_Q$ and $\eta_\lambda$ should be satisfied~\cite{borkar1997stochastic}
\begin{subequations}
	\begin{align}
		&\sum_{k=1}^\infty \eta_Q[k] = \infty, \sum_{k=1}^\infty (\eta_Q[k])^2 < \infty, \label{eq: inner loop condition}\\
		&\sum_{k=1}^\infty \eta_\lambda[k] = \infty, \sum_{k=1}^\infty (\eta_\lambda[k])^2 < \infty, \label{eq: outer loop condition}\\
		&\lim_{k\to\infty}\frac{\eta_Q[k]}{\eta_\lambda[k]} = 0, \label{eq: joint condtion}
	\end{align}
\end{subequations}
where eqn.~\eqref{eq: inner loop condition} ensures convergence of $Q$ for a fixed $\lambda$, eqn.~\eqref{eq: outer loop condition} ensures convergence of $\lambda$ for a fixed $Q$, and eqn.~\eqref{eq: joint condtion} ensures joint convergence of the coupled update rules.

\begin{remark}[Monotonic decreasing of $Q_\lambda(\theta,0)-Q_\lambda(\theta,1)$ in $\lambda$]
	Recalling the performance difference formula in the proof of \textbf{Theorem~\ref{theorem: optimality at boundary}}, we can obtain
	\begin{align*}
		Q_\lambda(\theta,0)-Q_\lambda(\theta,1) = \frac{J(\theta+1,\lambda)-J(\theta,\lambda)}{d^{\theta+1}(\theta)}.
	\end{align*}
	From eqn.~\eqref{eq: subadditive} in \textbf{Lemma~\ref{lemma: montone optimal policy}}, we can derive, if $\lambda \leq \lambda'$,
	\begin{align*}
		Q_\lambda(\theta,0)-Q_\lambda(\theta,1) =&  \frac{J(\theta+1,\lambda)-J(\theta,\lambda)}{d^{\theta+1}(\theta)}\\
		 \geq& \frac{J(\theta+1,\lambda')-J(\theta,\lambda')}{d^{\theta+1}(\theta)}\\
		 =& Q_{\lambda'}(\theta,0)-Q_{\lambda'}(\theta,1).
	\end{align*}
\end{remark}

\subsection{Solving Index with Analytic Formula}\label{subsec: analytic index}
The relative $Q$-learning-based index learning algorithm in Algorithm~\ref{alg: index learning} provably converges to Whittle's index. However, it is an iterative algorithm, which requires tuning the learning rates $\eta_Q$ and $\eta_\lambda$, and incurs extra storage space for saving the temporary $Q$-factors. In particular, the space complexity of the $Q$-factor is $O(M \cdot N \cdot \vert \mathbb{S} \vert^2 \times 2)$ for a system consisting of $M$ channels, $N$ users, $\vert \mathbb{S}\vert$ thresholds and states, and binary actions. We derive an analytic formula to compute the indices with channel quality ${\rho}_m$, which significantly reduces the space complexity and yields faster convergence in practice (to be shown in experiments).

\textbf{Solving stationary distribution}.
Let $d^\theta(\cdot)$ denote the stationary distribution over the state space under a threshold policy with threshold $\theta$,
\begin{align*}
    d^\theta(z) := \Pr(s=z \mid \theta).
\end{align*}
If $\theta < S$, the stationary distribution is
\begin{align}\label{eq: stationary distribution 1}
    d^\theta(z) = 
    \begin{cases}
        \beta, & z < \theta, \\
        (1-\rho)^{z-\theta}\beta, & \theta \leq z < S, \\
        \frac{(1-\rho)^{S-\theta-1}}{\rho}\beta, &z = S
    \end{cases}
\end{align}
If $\theta=S$, the stationary distribution is
\begin{align}\label{eq: stationary distribution 2}
    d^\theta(z) = 
    \begin{cases}
        \beta, & z < S,\\\
         \beta/\rho, & z = S.\\
    \end{cases}
\end{align}
Plugging the stationary distribution into the balancing equation $\sum_{s\in\mathbb{S}} d^\theta(s) = 1$, we can solve
\begin{align*}
	\beta_{\theta,\rho} = \frac{1}{\theta-1 + 1/{\rho}}
\end{align*}
and obtain $d^\theta(\cdot)$ under the threshold policy $\theta$ by substituting $\beta $ with $\beta_{\theta,\rho}$ for eqns.~\eqref{eq: stationary distribution 1} and ~\eqref{eq: stationary distribution 2}.

\textbf{Solving index values.} Under a threshold policy $\theta$, the average cost is the summation of the average state-dependent cost and the (virtual) average transmission cost,
\begin{align*}
	J^\theta (\lambda) = \mE_{s\sim d^\theta(\cdot)}[h(s)] + (\tau+\lambda) \mE_{s\sim d^\theta(\cdot)}[\bm{1}(s\geq\theta)].
\end{align*}
According to \textbf{Theorem~\ref{theorem: optimality at boundary}}, Whittle's index should make the two policies $\theta$ and $\theta+1$ yield the same average cost,
\begin{align*}
	\nu(\theta) = \lambda, \quad \text{when~}
    J^\theta(\lambda) = J^{\theta+1}(\lambda).
\end{align*}
Therefore, by solving
\begin{align*}
	&\mE_{s\sim d^\theta(\cdot)}[h(s)] + (\tau+\lambda) \mE_{s\sim d^\theta(\cdot)}[\bm{1}(s\geq\theta)] \\
	=&
	\mE_{s\sim d^{\theta+1}(\cdot)}[h(s)] + (\tau+\lambda) \mE_{s\sim d^{\theta+1}(\cdot)}[\bm{1}(s\geq\theta)],
\end{align*}
we derive $\nu(\theta)=\lambda$ as
\begin{align}\label{eq: analytic index}
	\nu(\theta) = \frac{\mE_{s\sim d^{\theta+1}(\cdot)}[h(s)] - \mE_{s\sim d^\theta(\cdot)}[h(s)]}
	{\mE_{s\sim d^\theta(\cdot)}[\bm{1}(s\geq\theta)] - 
		\mE_{s\sim d^{\theta+1}(\cdot)}[\bm{1}(s\geq\theta+1)]} - \tau.
\end{align}
This formula yields an analytic formula for computing Whittle's index. In the online case, $\rho_m$ are unknown and one can resort to their estimations, the empirical transmission success rates as follow,
\begin{align}\label{eq: empirical transmission success rate}
	\hat{\rho}_m[k]:=&\frac{\sum_{k'=1}^k\bm{1}(\gamma_m[k']=1)}{\sum_{k'=1}^k \sum_{n=1}^N \bm{1}(a_n[k']=m)},
\end{align}
or in a recursive form
$$\hat{\rho}_m[k] = \frac{k-1}{k}\hat{\rho}_m[k-1] + \frac{1}{k} \bm{1}(\gamma_m[k]=1).$$

\subsection{Implementations with the Whittle's Indicies}
We presented an iterative algorithm and an analytic formula to compute the Whittle's indices $\nu_{m,n}(\cdot):\mathbb{S}_n\to\mathbb{R}$ for user $n$ using channel $m$ at each state. We now discuss how to apply them to the scheduling problem.

\textbf{Scheduling algorithms}. We propose a value-based (\textbf{Algorithm~\ref{alg: value-based index}}) and a channel-based algorithm (\textbf{Algorithm~\ref{alg: channel-based index}}) by using the indices of the current AoIs $\Omega_{m,n}:=\nu_{m,n}(s_n)$. The value-based algorithm schedules the channel-user pairs according to the rank of all entries in $\Omega_{m,n}$. The channel-based algorithm avoids sorting the whole $\Omega_{m,n}$. Instead, it first determines the channel order according to the estimated channel quality $\hat{\rho}_m$ and then determines the scheduling pairs according to the relative rank in each channel.

\begin{algorithm}[t]
	\caption{Index-value-based scheduling algorithm}\label{alg: value-based index}
	\begin{algorithmic}[1]
		\State \textbf{Input:} state $s=(s_1,\dots,s_n)$, estimated index $\nu_{m,n}(\cdot)$
		\State \textbf{Output:} action vector $a$
		\vspace{0.5em}
		\State $c,a \gets$ \texttt{ZeroVector(M)}, \texttt{ZeroVector(N)}
		\State $\Omega\gets$ \texttt{ZeroMatrix(M,N)} \gcmt{init. the current indices}
		\For{$m,n$ in $\{1,\dots,M\}\times\{1,\dots,N\}$}
		\State $\Omega[m,n] \gets \nu_{m,n}(s_n)$ \gcmt{get index}
		\EndFor
		\vspace{0.5em}
		\For{\texttt{index} in \texttt{Sorted$V$}}
				\State $m, n \gets $ channel and user id from \texttt{index}
			\If{$a[n]$ equals 0} \gcmt{user $n$ is not scheduled}
				\State $a[n] \gets m$ and $c[m] \gets 1$
				\If{\texttt{sum}($c$) equals $M$} \gcmt{all channels are used}
					\State break
				\EndIf
			\EndIf
		\EndFor
	\end{algorithmic}
\end{algorithm}
\begin{algorithm}[t]
	\caption{Channel-based scheduling algorithm}\label{alg: channel-based index}
	\begin{algorithmic}[1]
		\State \textbf{Input:} state $s=(s_1,\dots,s_n)$, estimated index $\nu_{m,n}(\cdot)$, \\
		\quad and estimated channels $\hat{\rho}_m, m=1,\dots,M$
		\State \textbf{Output:} action vector $a$
		\vspace{0.5em}
		\State $a \gets$ \texttt{ZeroVector(N)}
		\State $\Omega\gets$ \texttt{ZeroMatrix(M,N)} \gcmt{init. the current indices}
		\For{$m,n$ in $\{1,\dots,M\}\times\{1,\dots,N\}$}
		\State $\Omega[m,n] \gets \nu_{m,n}(s_n)$ \gcmt{get index}
		\EndFor
		\vspace{0.5em}
		\State \texttt{sort(channel,descend,key=$\hat{\rho}_m$)} \gcmt{sort channels}
		\For{$m$ in \texttt{SortedChannels}}
		\State \texttt{sort($\Omega[m,:]$,\,descend)} \gcmt{sort users}
		\For{$n$ in \texttt{Sorted$\Omega[m,:]$}}
		\If{$a[n]$ equals 0} \gcmt{user $n$ not scheduled}
		\State $a[n] \gets m$ and \textbf{break}
		\EndIf
		\EndFor
		\EndFor
	\end{algorithmic}
\end{algorithm}

\textbf{Saving transmission energy}.
The original Whittle's index policy exhausts channel usage. This is beneficial when there are no transmission costs. If there are additional transmission costs, this policy can waste energy. An energy-saving policy is thus preferred. We can construct an energy-saving policy from the computed indices. \textbf{Lemma~\ref{lemma: montone optimal policy}} shows that $\theta^\star(0) \geq \theta^\star(\lambda)$ for $\lambda\leq 0$. Therefore, a negative index shows that the optimal threshold for transmission is larger than the current value for the current single process. In other words, the negative index value indicates that the expected return for reducing the current AoI cannot compensate the transmission costs. Motivated by this observation, we propose a \textbf{refined Whittle's index policy} which requires that a channel-user pair cannot be scheduled if the corresponding index is negative. We implement the refined policy by requiring that $\Omega_{m,n}>0$ hold before $a[n] \gets m$ in \textbf{Algorithm~\ref{alg: value-based index}} and~\textbf{\ref{alg: channel-based index}}.

\textbf{Exploration-exploitation trade-off}.
Transmitting data is beneficial for reducing holding costs and acquiring knowledge of $\rho_m$ (exploration). However, transmitting data incurs energy costs and thus may reduce the overall performance (insufficient exploitation). The fundamental exploration-exploitation trade-off is caused by our aleatoric uncertainty of $\rho_m$. A provably efficient way (incurring a sublinear optimality gap with respect to the decision epochs) to handle the trade-off is to schedule optimistically in face of uncertainty~\cite{auer2002using}. 
We consider two optimistic strategies, \textbf{upper confidence bound} (\textbf{UCB}) and \textbf{optimistic initialization}. We propose the following UCB-augmented version for computing Whittle's index,
\begin{align*}
	\tilde{\nu}_{n,m}(s_n)[k] \gets \nu_{n,m}(s_n)[k] + \sigma \sqrt{\frac{\ln k}{N_m[k]}},
\end{align*}
where $N_m[k]$ is the number of transmissions of channel $m$ at time $k$ and $\sigma$ is a hyperparameter for tuning exploration preference. A higher $\sigma$ encourages exploration (using the channel) more often. The UCB-type algorithm is provably efficient but requires tuning the hyperparameter $\sigma$. Alternatively, the optimistic initialization strategy initializes $\hat{\rho}_m=1$ for all $m$.
\section{Numerical Experiments}
We perform numerical examples to evaluate our proposed index-based policies and compare them with other heuristics.

\textbf{System configurations}.
We let $S_n=S$ for all $n$. For each user, we uniformly sample $S$ real numbers between $0$ to $20$ and sort them in an ascending order to represent the holding costs. By default, for each channel $m$, we uniformly sample the transmission success rate $\rho_m$ from $0.7$ to $0.9$ and the unit transmission cost $\tau_m$ from $10$ to $20$.

\textbf{Algorithms}.
We evaluate our proposed index-based algorithms and compare them with other standard algorithms. We also solve optimal policies using a policy iteration algorithm (assuming that $\rho_m$ is known) when the size of the problem is within the reach of our computing device. The algorithms and their abbreviations are listed in TABLE~\ref{tab: algorithms}. The optimal policy is only used in the offline evaluation. The UCB-based and the $Q$-learning-based index-learning variants are only used in the online evaluation. Except for the UCB-based index policy, all policies adopt an optimistic initialization strategy.
\begin{table}[t]
	\centering
	\caption{Algorithms and their abbreviations} \label{tab: algorithms}
	\begin{tabular}{cc}
		\toprule
		Abbreviation & Description \\
		\midrule
		\texttt{opt} & the optimal policy \\
		\texttt{idx-v} & index policy + value-based selection \\
		\texttt{idx-c} & index policy + channel-based selection  \\
		\texttt{idx-v-r} & \texttt{idx-v} + energy-saving strategy \\
		\texttt{idx-c-r} & \texttt{idx-c} + energy-saving strategy \\
		\texttt{idx-v-r:$\sigma$} &  \texttt{idx-v-r} + UCB parameter $\sigma$ \\
		\texttt{idx-v-r-q} &  \texttt{idx-v-r} with $Q$-learning \\
		\texttt{m-S} & myopic policy for the current holding costs \\
		\texttt{m-T} & myopic policy for the current AoIs \\
		\bottomrule
	\end{tabular}
\end{table}

\begin{figure}[t]
	\centering
	\includegraphics[width=0.3\textwidth]{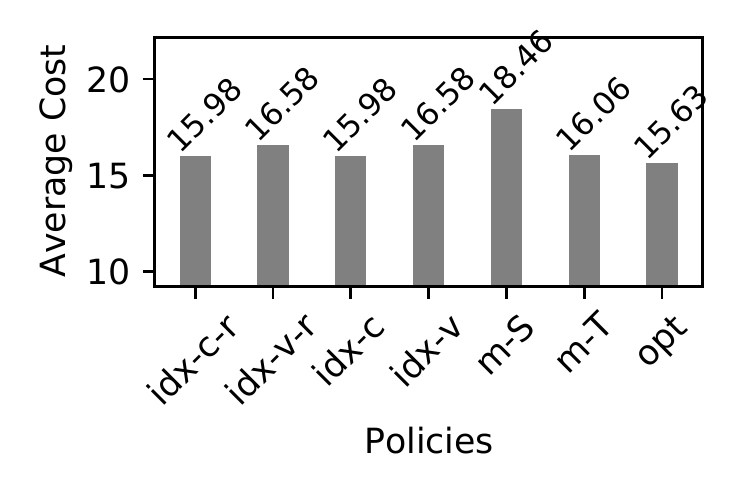}
	\caption{Offline evaluation ($\rho_m$ is known) with $\tau_m=0, \forall m$.}
	\label{fig: offline performance no transmission cost}
	\includegraphics[width=0.3\textwidth]{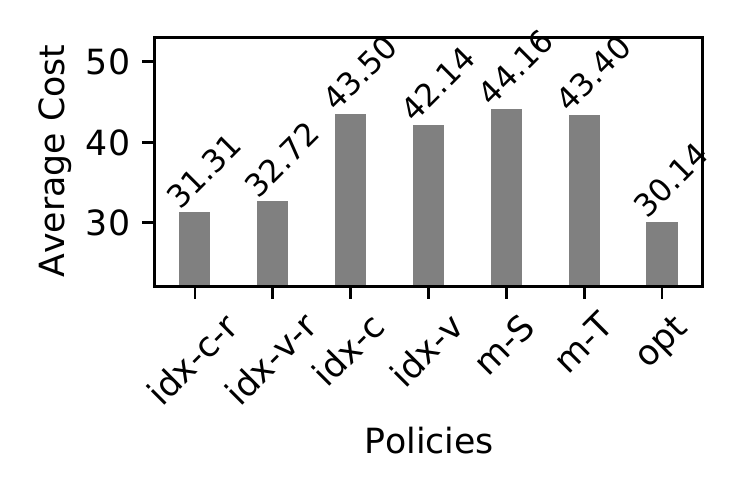}
	\caption{Offline evaluation ($\rho_m$ is known) with $\tau_m\in[10,20], \forall m$.}
	\label{fig: offline performance}
\end{figure}
\subsection{Offline Evaluation}

We evaluate our algorithms in an offline setup where $\rho_m$ is known. We perform a policy iteration algorithm to find the optimal policy for evaluating the optimality gap for other policies. We compare the exact average costs of the algorithms by solving $J^\pi$ from the Poisson equation~\cite[eqn.~(2.12)]{cao2007stochastic} associated with each policy $\pi$,
\begin{align*}
	V^\pi(s) + J^\pi =& R(s,\pi(s)) + \mE_{s'\sim \Pr(\cdot|s,\pi(s))}[V^\pi(s')], ~\forall s\in\mathbb{S}.
\end{align*}
This consists of underdetermined linear equations with $\vert \mathbb{S} \vert$ equations and $\vert \mathbb{S} \vert + 1$ unknowns ($V^\pi(s)$ and $J^\pi$). Since the solution of $V^\pi(s)$ is valid if they are added with a common constant, we add one more equation $V^\pi(s_\mathrm{fixed}) = 0$, where $s_\mathrm{fixed}$, to obtain one solution.
We set $N=3$, $M=2$, and $S=10$. A larger $N$ raises an \texttt{out-of-memory} error on our computing device with $16$ GB memory.

Fig.~\ref{fig: offline performance no transmission cost} shows the average cost (holding cost plus transmission costs with $\tau_m=0, \forall m$) for all the policies we considered. When we do not consider transmission costs, all policies yield similar performances that are close to the optimal policy. Fig.~\ref{fig: offline performance} shows the average cost (holding cost plus transmission costs with $\tau_m\in[10,20], \forall m$) for all the policies we considered. The performance of the myopic policies (\texttt{m-S} and \texttt{m-T}) and the index policies without the energy-saving strategy degrades significantly because they ignore the transmission costs.

\subsection{Online Evaluation}

We evaluate the algorithms in an online setup where $\rho_m$ is unknown. Myopic policies (\texttt{m-S} and \texttt{m-T}) and the channel-based index policy require $\rho_m$ for online scheduling, while the analytic formula eqn.~\eqref{eq: analytic index} requires $\rho_m$ to compute the indices. We use the empirical estimate in eqn.~\eqref{eq: empirical transmission success rate} to estimate $\rho_m$. 

We adopt the configuration at the beginning of this section to set the holding costs, the transmission success rate, and the transmission costs. To evaluate the average performance, we run simulations by each policy to the system with 250 epochs, where the initial state is always $s[1]=(1,\dots,1)$. We repeat the simulation ten times and record the moving average of each trajectory of the per-epoch costs.

We set $N=10$, $M=5$, and $S=10$ to compare the transient performance with the learning mechanism. Fig.~\ref{fig: learning curve} shows the mean and standard deviations from the moving averages of the cost trajectories. The index-based policies yield very close performance and all incur much smaller costs than the myopic ones. The convergence of the $Q$-learning-based index is much slower than the other index policies because the others leverage the analytic formula to compute the indices. The pessimistic UCB variant ($\sigma=-10$) yields better performance than the optimistic UCB variant ($\sigma=10$) because it prefers an energy-saving policy. However, the asymptotic performance of the pessimistic version is worse than that of the optimistic initialization version due to insufficient exploration.

We also compare the average performance by scaling the number of users and channels, where we keep $S=10$ and $\frac{N}{M}=2$. We record the final moving average mean over the 250 decision epochs. Fig.~\ref{fig: online performance} shows the related results. Similar to before, index-based policies yield similar performances and are more efficient than myopic polices.

\begin{figure}[t]
	\centering
	\includegraphics[width=0.35\textwidth]{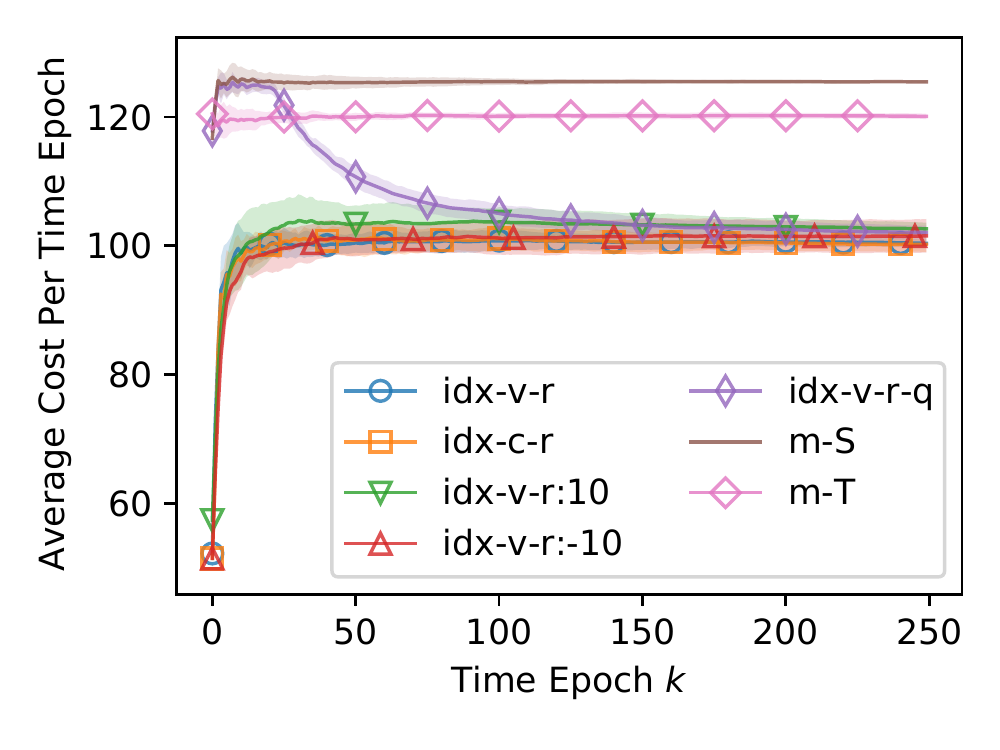}
	\caption{Transient performance of algorithms in online environments (unknown $\rho_m$). The quantities are the moving average of the corresponding raw values.}
	\label{fig: learning curve}
	\includegraphics[width=0.35\textwidth]{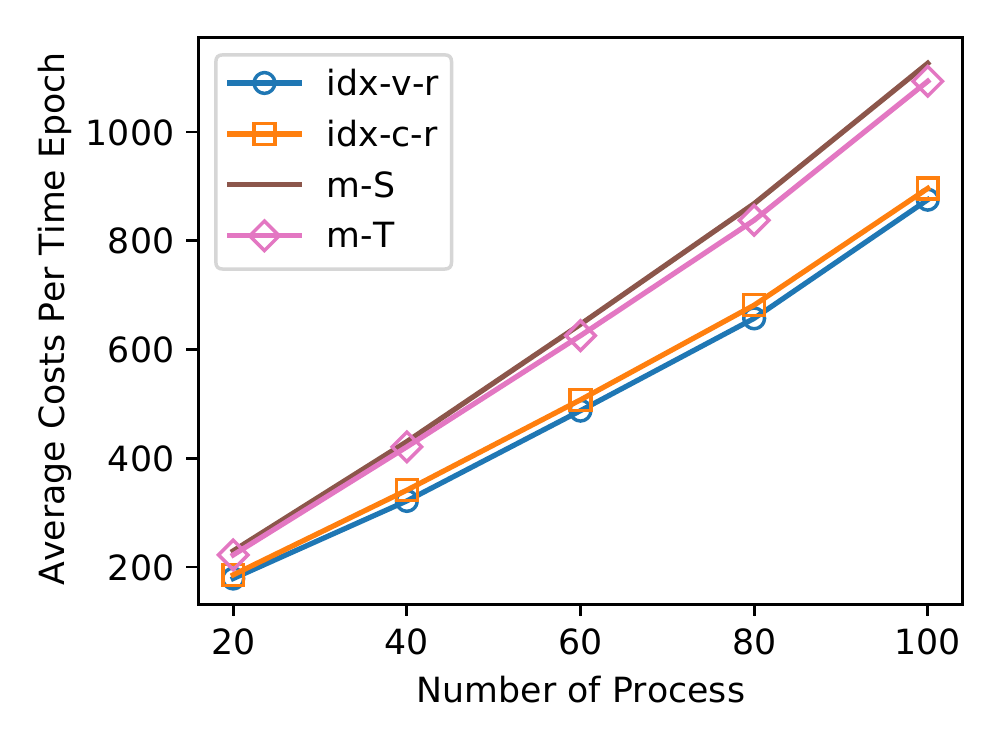}
	\caption{Scaling of average costs in online environments (unknown $\rho_m$) with a varying number of processes. The data point corresponds to the mean from time repeats. The corresponding standard deviations are very small compared with the mean and omitted in the figure.}
	\label{fig: online performance}
\end{figure}
\section{Conclusions}

We studied scheduling multiple users with multiple channels for minimizing heterogeneous information holding costs plus transmission energy costs. The problem is fundamentally challenging because the size of the state space grows exponentially with respect to the number of users. We resorted to a decomposition-based method to develop efficient suboptimal scheduling policies. In particular, by fixing channels and users, we extracted small MDPs with binary action spaces and proved that these small MDPs are indexable in Whittle's sense. Building upon the theoretical results, we further developed analytic formula to compute the Whittle's index and efficient algorithms to learn the indices for scheduling the channel-user pairs when the channel transmission successful rates are absent. Our numerical example showed that, among the variants of the scheduling heuristics, the energy-saving strategy for the value-based scheduling algorithm achieved the best performance compared to the optimal scheduling policy.
\appendices

\section{Auxiliary Results}

\begin{lemma}\label{lemma: performance difference}
	The performance difference between two policies $\pi'$ and $\pi$ is $$J(\pi_1) - J(\pi_2) = \mE_{s\sim d^{\pi_1}(\cdot)}[Q^{\pi_2}(s,\pi_1(s))-Q^{\pi_2}(s,\pi_2(s))],$$
    where $Q^\pi(s,a)$ is the relative $Q$-factor satisfying the Bellman equation for policy $\pi$,
	$$Q^\pi(s,a)=c(s,a)-J(\pi)+\mE_{s'\sim\Pr(\cdot|s,a)}[Q^\pi(s',\pi(s'))].$$
\end{lemma}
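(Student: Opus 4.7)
The plan is to derive the identity directly from the Bellman equation for $Q^{\pi_2}$, using the defining property of the stationary distribution $d^{\pi_1}$ to collapse one-step expectations under $\pi_1$.

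First I would evaluate the Bellman equation
\begin{align*}
Q^{\pi_2}(s,a) = c(s,a) - J(\pi_2) + \mE_{s'\sim\Pr(\cdot|s,a)}[Q^{\pi_2}(s',\pi_2(s'))]
\end{align*}
at the action $a = \pi_1(s)$, then take expectation of both sides with respect to $s \sim d^{\pi_1}(\cdot)$. The left-hand side becomes $\mE_{s\sim d^{\pi_1}}[Q^{\pi_2}(s,\pi_1(s))]$, the cost term becomes $\mE_{s\sim d^{\pi_1}}[c(s,\pi_1(s))]$, which is exactly $J(\pi_1)$ by the definition of the average cost under $\pi_1$, and the constant term contributes $-J(\pi_2)$.

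The key step is handling the last expectation. Because $d^{\pi_1}$ is stationary under the transition kernel induced by $\pi_1$, we have the identity
\begin{align*}
\mE_{s\sim d^{\pi_1}}\mE_{s'\sim\Pr(\cdot|s,\pi_1(s))}[f(s')] = \mE_{s'\sim d^{\pi_1}}[f(s')]
\end{align*}
for any bounded $f$. Applying this with $f(s') = Q^{\pi_2}(s',\pi_2(s'))$ converts the double expectation into a single expectation over $d^{\pi_1}$. Rearranging the resulting identity isolates $J(\pi_1) - J(\pi_2)$ on one side and the claimed $\mE_{s\sim d^{\pi_1}}[Q^{\pi_2}(s,\pi_1(s)) - Q^{\pi_2}(s,\pi_2(s))]$ on the other.

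The only real obstacle is the regularity conditions needed to justify the stationary-distribution identity and the existence of $Q^{\pi_2}$ as a bounded solution of the Poisson equation. For the extracted single-process MDPs considered in the paper, the state space is finite and every threshold policy induces an irreducible, aperiodic Markov chain (the chain visits state $1$ with positive probability from any state because $\rho>0$), so $d^{\pi_1}$ and $d^{\pi_2}$ exist uniquely and $Q^{\pi_2}$ is well-defined up to an additive constant, which cancels in the difference $Q^{\pi_2}(s,\pi_1(s)) - Q^{\pi_2}(s,\pi_2(s))$. I would state these assumptions explicitly at the start of the proof so that all expectations and the stationary identity are rigorously justified.
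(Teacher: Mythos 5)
Your proof is correct and follows essentially the same route as the paper: both arguments hinge on evaluating the Poisson equation for $Q^{\pi_2}$ at the action $\pi_1(s)$, taking the expectation under the stationary distribution $d^{\pi_1}$, and using stationarity to collapse the one-step expectation $\mE_{s\sim d^{\pi_1}}\mE_{s'\sim\Pr(\cdot\mid s,\pi_1(s))}[\cdot]$ into $\mE_{s\sim d^{\pi_1}}[\cdot]$. The only cosmetic difference is that you invoke the ergodic identity $J(\pi_1)=\mE_{s\sim d^{\pi_1}}[c(s,\pi_1(s))]$ directly (and rightly flag the irreducibility/aperiodicity conditions that justify it), whereas the paper reaches the same cancellation by also writing out the Poisson equation for $Q^{\pi_1}$ and adding and subtracting $\mE_{s'\sim\Pr(\cdot\mid s,\pi_1(s))}[Q^{\pi_2}(s',\pi_2(s'))]$; your version is, if anything, the cleaner bookkeeping.
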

\begin{proof}
	We derive from the Bellman equation, for any $a_1,a_2$,
	\begin{align*}
		J&(\pi_1) - J(\pi_2) \\
		 =& \Big\{ c(s,a_1)+\mE_{s'\sim\Pr(\cdot|s,a_1)}[Q^{\pi_1}(s',\pi_1(s'))] - Q^{\pi_1}(s,a_1) \Big\}\\
		 &- \Big\{ c(s,a_2)+\mE_{s'\sim\Pr(\cdot|s,a_2)}[Q^{\pi_2}(s',\pi_2(s'))] - Q^{\pi_2}(s,a_2) \Big\} \\
	 	=& \Big\{ c(s,a_1)+\mE_{s'\sim\Pr(\cdot|s,a_1)}[Q^{\pi_2}(s',\pi_2(s'))] - Q^{\pi_1}(s,a_1) \Big\}\\
		 &- \Big\{ c(s,a_2)+\mE_{s'\sim\Pr(\cdot|s,a_2)}[Q^{\pi_2}(s',\pi_2(s'))] - Q^{\pi_2}(s,a_2) \Big\} \\
		 +  & \mE_{s'\sim\Pr(\cdot|s,a_1)}[Q^{\pi_1}(s',\pi_1(s'))] - \mE_{s'\sim\Pr(\cdot|s,a_1)}[Q^{\pi_2}(s',\pi_2(s'))].
	\end{align*}
	Let $a_1=\pi_1(s)$ and $a_2=\pi_2(s)$. We compute the expectation for the stationary distribution $s\sim d^{\pi_1}(\cdot)$ on both sides. Since $d^{\pi_1}(\cdot)$ is a stationary is distribution, we can obtain
     $$\mE_{s\sim d^{\pi_1}(\cdot)}\Big[ \mE_{s'\sim\Pr(\cdot|s,a_1)}[ Q^{\pi_i}(s',\pi_i(s')) ] \Big] = \mE_{s\sim d^{\pi_1}(\cdot)}[ Q^{\pi_i}(s,a_i) ] $$ for $i=1,2$. Therefore, we derive that
	\begin{align*}
		J&(\pi_1) - J(\pi_2)
		= \mE_{s\sim d^{\pi_1}(\cdot)}[Q^{\pi_2}(s,\pi_1(s))-Q^{\pi_2}(s,\pi_2(s))].
	\end{align*}
    This completes the proof.
\end{proof}

\begin{lemma}[Monotonicity of $V_\lambda^\theta(s)$]\label{lemma: monotonicity of V}
For a fixed threshold $\theta$ and virtual cost $\lambda$, the relative value function $V_\lambda^\theta(s)$ is increasing in $s$.
\end{lemma}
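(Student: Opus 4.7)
The plan is to work directly from the Bellman equation satisfied by $V := V_\lambda^\theta$,
\[
V(s) = h(s) + (\tau + \lambda)\,\pi_\theta(s) - J(\theta,\lambda) + \mathbb{E}_{s' \sim \Pr(\cdot \mid s, \pi_\theta(s))}[V(s')],
\]
computing the first differences $\Delta(s) := V(s+1) - V(s)$ and verifying $\Delta(s) \geq 0$ across the three regimes defined by the threshold: $s < \theta - 1$, $s = \theta - 1$, and $s \geq \theta$.

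For $s \geq \theta$, both $s$ and $s+1$ take action $1$, so subtracting adjacent Bellman equations and using the transition kernel yields the backward recurrence $\Delta(s) = [h(s+1) - h(s)] + (1-\rho)\Delta(s+1)$ for $\theta \leq s \leq S - 2$, together with the terminal identity $\Delta(S-1) = h(S) - h(S-1)$ (the reflection convention at $S$ causes the $(1-\rho)V(S)$ terms to cancel on both sides). Since $h$ is increasing and $1 - \rho \in [0,1]$, backward induction from $s = S-1$ then gives $\Delta(s) \geq 0$ throughout this regime. For $s < \theta$, both states use action $0$ and the next state is deterministically $s+1$, so the Bellman equation collapses to $V(s) = h(s) - J(\theta,\lambda) + V(s+1)$ and hence $\Delta(s) = J(\theta,\lambda) - h(s)$. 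Because $h$ is increasing, monotonicity across this regime reduces to the single inequality $\Delta(\theta - 1) = J(\theta,\lambda) - h(\theta - 1) \geq 0$ at the boundary.

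The hard part will be establishing $J(\theta,\lambda) \geq h(\theta - 1)$. My plan is to combine the stationary-distribution formulas from Section~\ref{subsec: analytic index}, which give $J(\theta,\lambda) = \mathbb{E}_{d^\theta}[h(s)] + (\tau + \lambda)\,\Pr_{d^\theta}(s \geq \theta)$, with the Bellman equation at $s = \theta$ to pin down $J$. In the regime where Lemma~\ref{lemma: monotonicity of V} is actually invoked in the proof of Theorem~\ref{theorem: optimality at boundary}, namely at the coincident point where $J(\theta,\lambda) = J(\theta+1,\lambda)$ and equivalently $\tau + \lambda = \rho[V(\theta+1) - V(1)]$, substituting back and using the already-derived formulas $\Delta(s) = J(\theta,\lambda) - h(s)$ for $s \leq \theta - 1$ yields, after a brief algebraic reduction, the closed form $J(\theta,\lambda) = \rho \sum_{s \geq \theta + 1}(1-\rho)^{s-\theta-1} h(s)$---a geometric convex combination of $h(\theta+1), h(\theta+2), \ldots$. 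Since $h$ is increasing, this quantity is at least $h(\theta+1) > h(\theta - 1)$, which establishes $\Delta(\theta - 1) \geq 0$ and completes the proof. The finite-state-space boundary at $s = S$ requires only a routine truncation of the convex combination and does not change the sign.
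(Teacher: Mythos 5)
Your proof takes a genuinely different route from the paper's. The paper proves monotonicity of $V_{\lambda}^{\theta}$ by induction on the horizon of a discounted finite-stage value function and then passes to the average-cost relative value function via a vanishing-discount argument; you instead work directly from the Poisson equation of the fixed threshold policy and control the first differences $\Delta(s)=V(s+1)-V(s)$ regime by regime. Your treatment of the active region $s\geq\theta$ (the backward recursion $\Delta(s)=[h(s+1)-h(s)]+(1-\rho)\Delta(s+1)$ with terminal value $\Delta(S-1)=h(S)-h(S-1)$) is correct, and your reduction of the passive region to the single boundary inequality $J(\theta,\lambda)\geq h(\theta-1)$ isolates exactly where the difficulty lives. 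I checked your closed form: at a coincident point the stationary distributions of Section~\ref{subsec: analytic index} do give $J(\theta,\lambda)=\rho\sum_{z\geq\theta+1}(1-\rho)^{z-\theta-1}h(z)$, a convex combination of $h(\theta+1),h(\theta+2),\dots$ that dominates $h(\theta-1)$ since $h$ is increasing. The one substantive difference is scope: you establish the lemma only for the $\lambda$ at which $J(\theta,\lambda)=J(\theta+1,\lambda)$, whereas the statement claims monotonicity for every fixed $\lambda$. This restriction is actually a feature rather than a defect: when $\tau+\lambda<0$ and $h$ grows slowly one has $J(\theta,\lambda)=\mE_{d^\theta}[h]+(\tau+\lambda)\Pr_{d^\theta}(s\geq\theta)<h(\theta-1)$ and hence $\Delta(\theta-1)<0$, so the unrestricted claim fails, and the paper's own induction base case (that $h(s)+(\tau+\lambda)\bm{1}(s\geq\theta)$ is ``clearly increasing'') silently requires $\tau+\lambda\geq 0$. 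Since Lemma~\ref{lemma: monotonicity of V} is invoked in Theorem~\ref{theorem: optimality at boundary} only under the hypothesis $J(\theta,\lambda)=J(\theta+1,\lambda)$, your restricted version is precisely what that argument needs; what your approach buys is an elementary, self-contained derivation that also surfaces the hidden hypothesis, at the cost of not delivering the (false-in-general) unrestricted statement.
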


\begin{proof}
	The proof is split into three steps.
	
	\textbf{Montonicity for a finite epoch problem}. Define the value function for a finite-stage as		
		\begin{align*}
			V^\theta_{\lambda,\beta,K}(s) := \sum_{k=1}^K \beta^k\mE_{\pi_\theta}\Big[c(s[k],a[k])+\lambda\cdot a[k] \Big\vert s[1]=s\Big].
		\end{align*}
	We use induction to prove that  $V^\theta_{\lambda,\beta,K}(s)$ is increasing in $s$ for any $K$. When $K=1$,  $V^\theta_{\lambda,\beta,K}(s)=c(s,\pi_\theta(s))+\lambda \pi_\theta(s)$, which is clearly increasing in $a$. Assume
	 $V^\theta_{\lambda,\beta,K}(s)$ is increasing in $s$. We can derive that
	 \begin{align*}
	 	 V^\theta_{\lambda,\beta,K+1}(s) = c(s,\pi_\theta(s)) + \sum_{s'} \Pr(s'|s,\pi_\theta(s))  V^\theta_{\lambda,\beta,K}(s')
 	 \end{align*} 
  	is increasing in $s$ for both $s<\theta$ and $s\geq \theta$. The proof of the first step is complete.
	
	\textbf{Monotonicity for an infinite horizon}. Since the monotonicity holds for any $K$, it continues to hold
	for $V^\theta_{\lambda,\beta}(s) := \lim_{K\to\infty} V^\theta_{\lambda,\beta,K}(s)$.
	
	\textbf{Vanishing discount.} By the vanishing discount argument~\cite[Section 5.3]{hernandez1996discrete}, the relative value function $V^\theta_\lambda(s) = \lim_{\beta\uparrow 1} V^\theta_{\lambda,\beta}(s) - V^\theta_{\lambda,\beta}(s_\mathrm{fixed})$ for any fixed state $s_\mathrm{fixed}$. Therefore, the increasing $V_\lambda^\theta(s)$ is increasing in $s$.
\end{proof}

\bibliographystyle{IEEEtran}
\bibliography{main.bib}

\end{document}